\documentclass{article}

\usepackage{arxiv}

\usepackage[utf8]{inputenc} 
\usepackage[T1]{fontenc}    
\usepackage{hyperref}       
\usepackage{url}            
\usepackage{booktabs}       
\usepackage{amsfonts}       
\usepackage{nicefrac}       
\usepackage{microtype}      
\usepackage{lipsum}

\usepackage{amsfonts}
\usepackage{amsmath}
\usepackage{subcaption}
\usepackage{graphicx}
\usepackage{amssymb}
\usepackage{pifont}
\usepackage[inkscapelatex=false]{svg}
\usepackage{mathtools}
\usepackage{xcolor}
\usepackage{newtxtext,newtxmath}
\usepackage{lipsum}
\usepackage{multirow}
\usepackage{cite}
\usepackage{float}
\usepackage{algorithm}
\usepackage{algpseudocode}
\usepackage{stmaryrd}
\usepackage{algorithm}
\usepackage{algpseudocode}

\newcommand{\R}{{\mathbb{R}}}

\newcommand{\N}{{\mathbb{N}}}

\newcommand{\X}{{\mathbf{X}}}

\newcommand{\cmark}{\ding{51}}
\newcommand{\xmark}{\ding{55}}

\newtheorem{theorem}{Theorem}[section]
\newtheorem{assumption}{Assumption}

\newtheorem{definition}[theorem]{Definition}
\newtheorem{lemma}[theorem]{Lemma}
\newtheorem{remark}[theorem]{Remark}
\newenvironment{proof}{\paragraph{Proof:}}{\hfill$\square$}
\newtheorem{problem}[theorem]{Problem}
\usepackage{graphicx} 

\title{Fast and Sample Efficient Safety Verification via Extreme Learning Machine 
}

\author{
 Ahan Basu \\
  Centre for Cyber-Physical Systems\\
  Indian Institute of Science, Bengaluru, India\\
  \texttt{ahanbasu@iisc.ac.in} \\
   \And
 Mahathi Anand \\
  Chair of Robotics and System Intelligence \\ Munich Institute of Robotics and Machine Intelligence \\ Technical University of Munich, Germany\\
  \texttt{mahathi.anand@tum.de} \\
  \And
 Pushpak Jagtap \\
  Centre for Cyber-Physical Systems\\
  Indian Institute of Science, Bengaluru, India\\
  \texttt{pushpak@iisc.ac.in} \\
}

\begin{document}

\maketitle

\begin{abstract}
Deep learning methods like neural networks have greatly simplified the computation of safety certificates for complex nonlinear systems with unknown dynamics. However, due to the data-driven nature of these certificates and the complex architecture of neural networks, computation time as well as robustness guarantees across unseen data remain a challenge. This work aims to formally verify safety properties of discrete-time unknown systems by synthesizing extreme learning machine (ELM)-based barrier certificates. Compared to neural network counterparts, this approach greatly improves convergence guarantees and computational time due to its architectural simplicity and the convex nature of the underlying optimization problem. By minimizing the Lipschitz constant of the candidate barrier, we present a grid-based sampling technique to formally verify its validity using the minimum number of samples required. We demonstrate through numerical examples the effectiveness of our approach, and compare with traditional deep-learning based certificate synthesis to highlight its benefits.    
\end{abstract}

\section{Introduction}
Safety has become an utmost priority for complex safety-critical systems such as autonomous vehicles, robots, drones, etc. Barrier certificates provide an effective mechanism to formally guarantee the safety of the system. A barrier certificate is a real-valued function defined over the state space of the system such that it acts as a \textit{barrier} between the safe and unsafe regions. Correspondingly, the existence of such a function provides sufficient conditions for the system trajectories to avoid unsafe configurations. Barrier certificates were first proposed in the context of nonlinear systems~\cite{prajna2004safety} have since been used in the context of hybrid systems \cite{prajna2007framework} and stochastic systems \cite{jagtap2020formal, clark2021control}, among many others.

While it is straightforward to compute safe controllers given a barrier certificate~\cite{ames2019control}, synthesizing a barrier certificate that yields non-conservative results is a challenging problem. Traditionally, one requires to parameterize the barrier certificate with a specified template (e.g. polynomial functions), and then compute the corresponding coefficients via numerical optimization techniques such as sum-of-squares optimization \cite{parrilo2003semidefinite} or SMT solvers \cite{de2011satisfiability}. However, these methods have a high computational complexity, especially when dealing with complex, nonlinear system dynamics.  

Another fundamental challenge in synthesizing barrier certificates is the reliance on an accurate mathematical model of the system dynamics, which is often unavailable in practical applications. To overcome this limitation, several recent works have proposed learning barrier certificates directly from data using neural networks \cite{zhao2020synthesizing, mathiesen2022safety, dawson2023safe, anand2023formally} by leveraging the universal approximation property of neural networks \cite{barron1994approximation}. Nevertheless, existing DNN-based approaches remain computationally expensive because training requires repeated backpropagation while enforcing safety constraints over large regions of the state space, resulting in poor scalability. Moreover, their performance is sensitive to hyperparameter initialization and optimization settings, which can lead to slow or suboptimal convergence. Formal verification further compounds the computational burden, as it is commonly formulated as mixed-integer linear programming (MILP), satisfiability modulo theories (SMT), or semidefinite programming (SDP) problems whose computational complexity scales poorly with network size. 

To alleviate these issues, we propose to use extreme learning machines (ELM)~\cite{huang2006extreme}, a special case of single-layer neural networks whose hidden node weights are randomly assigned, while output layer weights are learned. The resulting convex programming admits a closed-form solution, eliminating iterative backpropagation and significantly reducing the training complexity \cite{zhou2024physics, markowska2021extreme}. In this work, we solve the safety verification problem for unknown discrete-time dynamical systems using ELM-based barrier certificates.To the best of our knowledge, no prior work exists that directly synthesizes barrier certificates for unknown dynamical systems using ELM without learning the system dynamics. In particular, we characterize barrier certificates as ELMs, and learn them using suitable data collected from the state sets. The sufficient conditions required to be satisfied by the barrier certificate are reformulated into appropriate loss functions. Since the hidden layers of the ELM are fixed, the barrier certificate conditions are affine in the weights of the output layer, and as a result, the loss function is also rendered affine and hence convex. However, the loss function is only evaluated over the training data, and as a result, the learned certificate may not be valid over unseen data. Therefore, to improve robustness, we bound the Lipschitz constant of the ELM through weight normalization and regularization during training. Moreover, to formally validate the certificates, a post-hoc verification is conducted using the Lipschitz bounds on the network. Specifically, we derive a Lipschitz constant-based validity condition whose satisfaction over finitely many grid-based data samples leads to the satisfaction of barrier certificate conditions over the entire state set. Finally we validate the efficacy of our approach with two case studies.
We also compare our method with existing neural methods, showcasing a significant reduction in computation time and samples required to provide the formal guarantee.

\section{Preliminaries and Problem Formulation}

\textit{Notations}: The symbols $\N$, $ \N_0$, $\R$, $\R^+$, and $\R_0^+$ denote the set of natural, non-negative integers, real, positive real, and nonnegative real numbers, respectively. 
The space of real matrices with $n$ rows and $m$ columns is denoted by $\R^{n\times m}$. The set of column vectors with $n$ rows is denoted by $ \R^n$.
{The Euclidean norm of a vector is represented by $|\cdot |$ while the spectral norm of a matrix is denoted by $\lVert \cdot \rVert$.}
For $a, b \in \N_0$ with $a \leq b$, the closed interval in $\N_0$ is denoted by $[a; b]$. 
Given a matrix $M\in\R^{n\times m}$, $M^\top$ represents the transpose of $M$. 
{For all $x, y \in \R^n$, the vector inequality $x \preceq y$ represents $x_i \leq  y_i$ for all $i \in [1;n]$.}

\subsection{System Definition}

\begin{definition}\label{def:system}
Consider a discrete-time dynamical system represented by the tuple $\Xi = (\X, \X_0, f)$, where $\X \subseteq \R^n$ is the state space of the system, $\X_0 \subset \X$ denotes set of initial states of the system, and $f: \X \rightarrow \X$ describes the state evolution via the following difference equation:
\begin{equation}\label{eq:act_system}
        \mathsf{x}(k+1) = f(\mathsf{x}(k)), \quad \forall k \in \N_0,
\end{equation}
with $\mathsf{x}(k) \in \X$ is the state of the system at $k$-th time instant.
\end{definition}

Given an initial state $x \in \X_0$, we denote the state sequence or the discrete-time trajectory of the system by $\mathbf{x}_x$, where $\mathbf{x}_x(k)= \mathsf{x}(k)$ with $\mathbf{x}_x(0)= x$. In this paper, the map $f$ is considered to be unknown, but we assume the availability of a black-box model such that, given the current state to the system, the next state of the system is accessible. As a result, we can collect finitely many samples of the system as $\{(x^{(0)}, f(x^{(0)})), \ldots, (x^{(N)}, f(x^{(N)}))\}$. The goal of the paper is to verify safety properties of the system, i.e., given an initial set $\X_0 \subset \X$ and unsafe set $\X_U \subset \X$, we want to determine if $\mathbf{x}_x(k) \in \X \setminus \X_U$, for all $x \in \X_0$ and $k \in \N$. To do so, we first recall the notion of barrier certificates as follows.

\subsection{Barrier Certificates}
{We propose the following lemma to correlate the safety of dynamical systems and the existence of barrier certificates.}
\begin{lemma} \cite{jagtap2020compositional}
    Given a discrete-time dynamical system $\Xi = (\X, \X_0, f)$ with initial set of states $\X_0 \subset \X$ and a set of unsafe states $\X_U \subset \X$ such that $\X_0 \cap \X_U = \emptyset$. If there exists a barrier certificate $B:\X \rightarrow \R$ satisfying the following conditions:
    \begin{subequations} \label{eq:CBC}
    \begin{align}
        B(x) \!&\leq\! 0, \quad \forall x \in \X_0, \label{eq:CBC_leq} \\
        B(x) \!&>\! 0, \quad \forall x \in \X_U, \label{eq:CBC_geq} \\
        B(f(x)) \!-\! B(x) \!&\leq\! 0, \quad \forall x \in \X, \label{eq:CBC_diff}
    \end{align}
    \end{subequations}
   then for every initial condition $x \in \X_0$ and for all $k \in \N$, the corresponding state trajectory $\mathbf{x}_x(k) \in \X \setminus \X_U$.
\end{lemma}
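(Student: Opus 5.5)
We argue by induction on time that the barrier value along any trajectory started in $\X_0$ never becomes positive, and then conclude with \eqref{eq:CBC_geq}. Fix an arbitrary initial condition $x \in \X_0$ and consider the trajectory $\mathbf{x}_x$ generated by \eqref{eq:act_system}. Since $f:\X\rightarrow\X$ by Definition~\ref{def:system}, every state $\mathbf{x}_x(k)$ lies in $\X$. This ensures that \eqref{eq:CBC_diff} may be applied at every time step. The central claim to establish is
\begin{equation*}
B(\mathbf{x}_x(k)) \leq 0, \quad \forall k \in \N_0 .
\end{equation*}

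The claim follows by induction on $k$. For the base case, $\mathbf{x}_x(0)=x\in\X_0$, so \eqref{eq:CBC_leq} gives $B(\mathbf{x}_x(0))\leq 0$. For the inductive step, suppose $B(\mathbf{x}_x(k))\leq 0$. Since $\mathbf{x}_x(k)\in\X$, condition \eqref{eq:CBC_diff} evaluated at $\mathbf{x}_x(k)$ yields
\begin{equation*}
B(\mathbf{x}_x(k+1)) = B(f(\mathbf{x}_x(k))) \leq B(\mathbf{x}_x(k)) \leq 0 .
\end{equation*}
Hence $B$ is non-increasing along trajectories, and in particular stays nonpositive.

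The conclusion is then a contradiction argument. Suppose that for some $k\in\N$ we had $\mathbf{x}_x(k)\in\X_U$. Then \eqref{eq:CBC_geq} would force $B(\mathbf{x}_x(k))>0$, contradicting the claim. Therefore $\mathbf{x}_x(k)\in\X\setminus\X_U$ for all $k\in\N$ and all $x\in\X_0$.

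There is no genuine obstacle in this argument. The only points needing care are that the difference condition \eqref{eq:CBC_diff} is imposed on all of $\X$ rather than merely on the safe set, and that $f$ maps $\X$ into itself, so the induction never leaves the domain of $B$. The hypothesis $\X_0\cap\X_U=\emptyset$ is consistent with, but in fact implied by, \eqref{eq:CBC_leq} and \eqref{eq:CBC_geq}, so it is not used separately.
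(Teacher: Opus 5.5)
Your proof is correct and follows the same reasoning the paper gives. The paper cites the lemma from \cite{jagtap2020compositional} and only notes informally that $B\le 0$ on $\X_0$, $B>0$ on $\X_U$, and $B$ is non-increasing along trajectories; your induction, which uses that $f$ maps $\X$ into itself, makes that sketch rigorous, and your remark that $\X_0\cap\X_U=\emptyset$ is implied by the first two conditions is also right.
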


The connection between system safety and the existence of a BC is straightforward. The zero-level set, $B(x)=0$, defines a boundary separating the safe and unsafe regions. According to condition \eqref{eq:CBC_leq}, any initial state $x_0 \in \X_0$ inherently satisfies $B(x_0) \le 0$, while condition \eqref{eq:CBC_geq} ensures that $B(x)$ is positive inside the unsafe set. Condition \eqref{eq:CBC_diff} ensures that $B(x)$ is non-increasing along system trajectories and consequently, trajectories cannot cross the boundary and move into the unsafe region, thereby guaranteeing safety. For systems with known dynamics, BCs are typically synthesized by fixing a functional template (e.g., a polynomial) and searching for its coefficients using techniques such as sum-of-squares (SOS) optimization or satisfiability modulo theory (SMT) solvers \cite{jagtap2020formal}. However, for an unknown system, it is intractable to compute the BC due to the presence of the map $f$ in condition \eqref{eq:CBC_diff}. Some recent works parameterize the BC as a neural network \cite{anand2023formally, zhao2020synthesizing} and learn it in a data-driven fashion without any knowledge of the map $f$. Most of the neural-network based approaches rely on backpropagation to optimize a non-convex objective, making convergence sensitive to the weight of the neural network initialization while offering no guarantee of attaining a global optimum. Motivated by these limitations, we propose an ELM-based BC formulation in the next section that recasts the synthesis problem as a convex program, enabling feasible solutions with polynomial-time computational complexity \cite[Ch.~11]{boyd2004convex}.

\subsection{Extreme Learning Machine}
An extreme learning machine (ELM) is a special case of a single-hidden-layer feedforward neural network given by 
\begin{align}\label{eq:elm}
    B(x; \beta) = \beta^\top \phi(Wx + b),
\end{align}
where $W\! \in \! \R^{p \times n}, b \! \in \! \R^p, \beta \! \in \! \R^p, p \! \in \! \N$ and $\phi$ is an activation function applied elementwise. While any activation function\footnote{To obtain universal approximation guarantees, a non-polynomial activation function is required.} is theoretically justified, we exclusively use any Lipschitz-bounded activation function (for example, ReLU, Sigmoid, Tanh, etc.), a property that is useful in providing formal guarantees. Note that the ELM consists of $p$ hidden neurons, with hidden layer weights $W$ and biases $b$ pre-selected randomly (i.e., are not trainable), while the output weights $\beta$ are unknown and require to be learned. 

\subsection{Problem Formulation}
We are now ready to state the main problem of the paper.
\begin{problem}\label{prob}
    Given a discrete-time dynamical system $\Xi = (\X, \X_0, f)$ as in \eqref{eq:act_system} with the initial set $\X_0 \subset \X$, unsafe set $\X_U \subset \X$ and an unknown map $f$, synthesize a suitable barrier certificate parameterized by an ELM as in \eqref{eq:elm} that verifies the system starting within $\X_0$ never reaches $\X_U$.
\end{problem}

{Since the ELM architecture is such that hidden layer weights are pre-selected, the equation~\eqref{eq:elm} is linear in the parameters $\beta$, resulting in a candidate barrier certificate that is also linear in the unknowns. As a result, conditions~\eqref{eq:CBC} are also rendered affine in $\beta$, and can be quickly optimized by solving a convex program (CP),} rather than a generic non-convex training problem. As a result, convergence to the global optima is possible when the problem is feasible.  In the following sections, we provide a sound methodology to learn ELM-based barrier certificates using only finitely many data samples while providing formal guarantees over the continuous state space.

\section{ELM-based Barrier Certificate Synthesis} \label{sec:train}

In this section, we first formulate the dataset generation procedure and the associated loss functions for synthesizing ELM-based barrier certificates. We then derive a validity condition ensuring that the synthesized certificate, trained on a finite sample set, remains valid over the entire continuous state space. We conclude by presenting the full training algorithm for the proposed framework.  

\subsection{Formulation of Loss Function}

First, we present the dataset generation and formulation of loss functions for ELM-based barrier certificates. The goal is to train a Lipschitz-bounded ELM-based barrier certificate, i.e., a function $B(x;\beta)$ such that $|B(x_1;\beta) - B(x_2; \beta) | \leq \mathcal{L}_{barr} |x_1 - x_2|$, for any $x_1, x_2 \in \X$ for some Lipschitz constant $\mathcal{L}_{barr}$. The bound $\mathcal{L}_{barr}$ is used for achieving formal guarantees, as detailed in the next subsection. The Lipschitz constant of the ELM is obtained as $\mathcal{L}_{barr} = \| W \| |\beta | \mathcal{L}_\phi$, where $\mathcal{L}_\phi$ is the Lipschitz constant of the activation function $\phi(\cdot)$ \cite{szegedy2013intriguing}. Note that most common activations like $\mathrm{ReLU}, \mathrm{Tanh}$, etc have a Lipschitz constant of $1$. 
However, since the weight $W \in \R^{p \times n}$ and bias $b \in \R^p$ are randomly chosen, the $\|W\|$ may be a large value. To evade this, we normalize the weight matrix as $\overline{W}: = W / \sigma_{\max}(W)$ where $\sigma_{\max}(\cdot)$ denotes the largest singular value of $W$, such that spectral norm of $\overline{W}$ satisfies $\lVert \overline{W}\rVert = 1$. Therefore, the structure of ELM \eqref{eq:elm} is modified as $B(x; \beta) = \beta^\top \phi(\overline{W} x + b)$.

To train the ELM barrier candidate, we collect a finite number of samples from the state space $\X$. First, we construct cover sets for the sets $\X_0, \X_U$, and $\X$ respectively. A cover of a set $\X$ is a set consisting of subsets of $\X$ whose union equals $\X$. The cover sets are constructed such that they consist of hyper-rectangles
\begin{align*}
    H_i (x_i, \varepsilon_i):=\{x \in \X | -\varepsilon_i \preceq x - x_i \preceq \varepsilon_i\},
\end{align*}
centered at grid points $x_i \in \X, \quad i \in [1;N]$ with $\varepsilon_i \in \R^n$ where $N$ denotes the number of cover subsets. Now, if we consider $\varepsilon = \max_i |\varepsilon_i|$, then for any $x \in \X$, there exists $x_i, i \in [1;N]$ such that $|x - x_i| \leq \varepsilon$. Therefore, by collecting the representative points $x_i$ for all $i \in [1;N]$, we create the dataset $\mathcal{X}$. To generate the datasets $\mathcal{X}_0$ and $\mathcal{X}_U$ corresponding to the initial set $\X_0$ and the unsafe set $\X_U$, we assign the existing representative points according to membership as $\mathcal{X}_0\!:=\!\{x_s | x_s \!\in\! \mathcal{X} \text{ and } \exists~x \!\in\! \X_0 \text{ s.t. }|x \!-\! x_s| \!\leq\! \varepsilon\}$ and $\mathcal{X}_U:=\{x_s | x_s \!\in\! \mathcal{X} \text{ and } \exists~x \!\in\! \X_U \text{ s.t. }|x \!-\! x_s| \!\leq\! \varepsilon \}$. We consider $\mathcal{X}_0 \cap \mathcal{X}_U = \emptyset$.

Now, we reformulate conditions~\eqref{eq:CBC} as a suitable loss function for training the ELM as:
\begin{subequations}\label{eq:sub-loss}
\begin{align}
    \mathsf{L}_1(\beta) &= \sum_{x_s\in\mathcal{X}_0}\max(0,(B(x_s; \beta) - \eta)), \\
    \mathsf{L}_2(\beta) &= \sum_{x_s\in\mathcal{X}_U}\max(0, (-B(x_s; \beta) + \delta - \eta)), \\
    \mathsf{L}_3(\beta) \!&= \!\sum_{x_s\in\mathcal{X}} \max(0, \!(B(f(x_s);\! \beta) \! -\! B(x_s;\! \beta)\! -\! \eta)) \label{eq:loss_diff}, 
\end{align}
\end{subequations}
where $\eta \leq 0$ is a slack variable that ensures robustness and aids in formal verification described in next subsection, and $\delta \in \R^+$ is a positive quantity to ensure the strict inequality as in \eqref{eq:CBC_geq}. Note that one can directly compute loss \eqref{eq:loss_diff} for any data point in $\mathcal{X}$ by simulating the black-box dynamical system during training. 
Finally, the actual loss function is the weighted sum of the sub-loss functions:
\begin{align}\label{eq:loss_CBC}
    \mathsf{L}(\beta) = c_1 \mathsf{L}_1(\beta) + c_2\mathsf{L}_2(\beta) + c_3 \mathsf{L}_3(\beta),
\end{align}
where $c_1, c_2, c_3 \! \in \! \R^+$. Now, since sub-loss functions are convex in $\beta$ and the sum of convex functions is also a convex function, the loss \eqref{eq:loss_CBC} is also a convex function. 

\begin{remark}
Since the part $\phi(\overline{W}x + b)$ is fixed due to fixed $\overline{W}$ and $b$, every condition in \eqref{eq:loss_CBC} effectively becomes a convex constraint with respect to $\beta$; thereby, the program becomes a simple convex program (CP). 
\end{remark}

Now we provide the following lemma to show how the BC conditions have been satisfied for the sampled data points from the state space upon the convergence of the loss~\eqref{eq:loss_CBC}.

\begin{lemma}\label{lem:loss_zero}
    Given an unknown discrete-time dynamical system $\Xi = (\X, \X_0, f)$ with $\X_0$ and $\X_U$ being the initial and unsafe sets. If the loss function in~\eqref{eq:loss_CBC} $\mathsf{L}(\beta)$ converges to zero with some given $\eta < 0$ and $\delta \in \R^+$ over the datasets $\mathcal{X}_0, \mathcal{X}_U, \mathcal{X}$, then the trained ELM $B(x; \beta)$ satisfies the conditions in~\eqref{eq:CBC} for the sampled data points. 
\end{lemma}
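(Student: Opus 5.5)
The argument is a direct unpacking of the loss. Since $c_1,c_2,c_3\in\R^+$ and each sub-loss $\mathsf{L}_j(\beta)$ in \eqref{eq:sub-loss} is a finite sum of terms of the form $\max(0,\cdot)\ge 0$, the total loss $\mathsf{L}(\beta)$ in \eqref{eq:loss_CBC} is a nonnegative combination of nonnegative terms. Hence $\mathsf{L}(\beta)=0$ holds if and only if every individual summand vanishes. I will read ``converges to zero'' as the trained $\beta$ attaining $\mathsf{L}(\beta)=0$. If only a limit is available, one can pass to the limit in each summand, since $B(x;\beta)$ is continuous in $\beta$ and the datasets are finite. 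I will note this but not belabor it.

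Next I will translate each vanishing summand into the corresponding inequality.

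\textbf{Initial set.} For every $x_s\in\mathcal{X}_0$, $\max(0,B(x_s;\beta)-\eta)=0$ gives $B(x_s;\beta)\le\eta<0$. This yields \eqref{eq:CBC_leq}, with margin $\eta$.

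\textbf{Unsafe set.} For every $x_s\in\mathcal{X}_U$, $\max(0,-B(x_s;\beta)+\delta-\eta)=0$ gives $B(x_s;\beta)\ge\delta-\eta\ge\delta>0$, using $\eta<0$ and $\delta>0$. This yields the strict inequality \eqref{eq:CBC_geq}.

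\textbf{Decrease condition.} For every $x_s\in\mathcal{X}$, the vanishing of the third summand gives $B(f(x_s);\beta)-B(x_s;\beta)\le\eta<0$. This yields \eqref{eq:CBC_diff} at the sample points. Here $f(x_s)$ is the black-box output recorded during data collection, so no knowledge of $f$ is needed.

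Finally, I will observe that these are exactly the conditions \eqref{eq:CBC} restricted to the datasets $\mathcal{X}_0,\mathcal{X}_U,\mathcal{X}$, each holding with robustness margin $|\eta|$. That margin is what the subsequent Lipschitz-based extension will use.

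There is no real obstacle. The only point requiring care is the sign bookkeeping with $\eta<0$: the unsafe-set inequality must remain strict, and the margins must go in the right direction. The interpretation of ``converges to zero'' is a minor secondary point, handled by the finiteness and continuity remark above.
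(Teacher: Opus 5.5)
Your proposal is correct and follows the paper's own argument: $\mathsf{L}(\beta)=0$ forces every nonnegative hinge term to vanish, which gives $B(x_s;\beta)\le\eta<0$ on $\mathcal{X}_0$, $B(x_s;\beta)\ge\delta-\eta>0$ on $\mathcal{X}_U$, and $B(f(x_s);\beta)-B(x_s;\beta)\le\eta<0$ on $\mathcal{X}$. You are more explicit than the paper, which leaves both the nonnegativity argument and the unsafe-set sign check under ``similarly''; your limit remark is only needed if ``converges'' is read literally, and in that case it also requires the iterates $\beta$ themselves to converge.
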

\begin{proof}
    The loss function $\mathsf{L}(\beta) = 0$ essentially implies all of its individual sub-loss functions $\mathsf{L}_i(\beta)=0$ for all $i=1,2,3$. Now, $\mathsf{L}_1(\beta)=0$ implies $B(x_s; \beta) \leq \eta$ which implies $B(x_s; \beta) \leq 0$ for all $x_s \in \mathcal{X}_0$. Similarly, whenever the other sub-loss functions are equal to zero, it implies the satisfaction of other barrier conditions over the sampled points, which completes the proof. 
\end{proof}

\subsection{Correctness Guarantees for ELM-Barrier Certificates} 
While Lemma~\ref{lem:loss_zero} encourages the satisfaction of conditions~\eqref{eq:CBC} over the sampled points, it does not ensure the satisfaction of the conditions beyond the training data. Therefore, the synthesized function $B(x; \beta)$ is only a candidate, but may not be a true safety certificate, and one needs to verify its validity a posteriori. To do this, we leverage the Lipschitz regularity of the network, recalling that the Lipschitz constant of the network bounds the output perturbations with respect to its input perturbations, and minimizing it thus improves its robustness with respect to unseen data $x \in (\X \setminus \mathcal{X})$. To ensure the satisfaction of conditions~\eqref{eq:CBC} robustly, we first raise the following assumption.

\begin{assumption}\label{assum:Lipschitz}
    The map $f$ is Lipschitz continuous with respect to $x$ with Lipschitz constant $\mathcal{L}_x$ in Euclidean norm that is known a priori. In case the Lipschitz constant is unknown, one can estimate it using data collected from the system, as described in \cite[Algorithm 1]{nejati2023formal}.
\end{assumption}

We utilize the Lipschitz bounds of the synthesized candidate certificates along with those of the system to provide formal validity guarantees for the certificate. We recall that the Lipschitz constant of the ELM-based barrier certificate is given by $\mathcal{L}_{barr} = \|\bar W\| \times |\beta| \times \mathcal{L}_{\phi}$, where $\|\bar W\| = 1$ and $\mathcal{L}_{\phi}$ is known a priori. Therefore, by minimizing the $|\beta|$, one can minimize $\mathcal{L}_{barr}$. To do so, we propose the underlying problem as a quadratic program (QP) formulated as:
\begin{align}\label{eq:CP}
    \min_{\beta} \quad |\beta|^2 \quad 
    \text{s.t.} \quad \mathsf{L}(\beta) = 0. 
\end{align}

We now provide the main theorem of the section that formally verifies that the synthesized ELM is a valid barrier certificate for the unknown discrete-time system. 

\begin{theorem}\label{th:verification}
    Given an unknown discrete time dynamical system $\Xi = (\X, \X_0, f)$ as in \eqref{eq:act_system} with $\X_0, \X_U \subset \X$ being the corresponding initial and unsafe sets. Suppose $B(x; \beta^*)$ is the trained ELM-barrier with optimized weight $\beta^*$ such that the loss function $\mathsf{L}(\beta^*) = 0$ over the training datasets $\mathcal{X}_0, \mathcal{X}_U$ and $\mathcal{X}$ using Algorithm~\ref{algo:NN_training} with a given choice of $\eta$ and $\varepsilon$. Then the function $B(x; \beta^*)$ is guaranteed to be a valid barrier certificate and the system is guaranteed to be safe under Assumption \ref{assum:Lipschitz} if the following condition holds:
    \begin{align}\label{eq:cond_ver}
        \eta + \mathcal{L}\varepsilon \leq 0,
    \end{align}
    where $\mathcal{L} = \mathcal{L}_{barr}(\mathcal{L}_x + 1)$.
\end{theorem}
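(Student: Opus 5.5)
The plan is to show that each of the three conditions in \eqref{eq:CBC} holds on the whole continuous set. We combine the sample-wise satisfaction from Lemma~\ref{lem:loss_zero}, which holds with margin $\eta$, with the Lipschitz bounds of $B(\cdot;\beta^*)$ and of $f$. The safety conclusion then follows directly from the barrier certificate lemma of \cite{jagtap2020compositional}. Throughout we use the covering property: for every $x$ in the relevant set there is a grid point $x_s$ with $|x-x_s|\le\varepsilon$. By construction of $\mathcal{X}_0$ and $\mathcal{X}_U$, this $x_s$ lies in $\mathcal{X}_0$ when $x\in\X_0$ and in $\mathcal{X}_U$ when $x\in\X_U$.

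\textbf{Condition \eqref{eq:CBC_leq}.} Take $x\in\X_0$ and the corresponding $x_s\in\mathcal{X}_0$. From $\mathsf{L}_1(\beta^*)=0$ we have $B(x_s;\beta^*)\le\eta$. Hence
\begin{align*}
B(x;\beta^*)\le B(x_s;\beta^*)+\mathcal{L}_{barr}\varepsilon\le \eta+\mathcal{L}_{barr}\varepsilon\le\eta+\mathcal{L}\varepsilon\le 0,
\end{align*}
since $\mathcal{L}_{barr}\le\mathcal{L}_{barr}(\mathcal{L}_x+1)$.

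\textbf{Condition \eqref{eq:CBC_geq}.} Take $x\in\X_U$ and the corresponding $x_s\in\mathcal{X}_U$. From $\mathsf{L}_2(\beta^*)=0$ we have $B(x_s;\beta^*)\ge\delta-\eta$. Hence
\begin{align*}
B(x;\beta^*)\ge\delta-\eta-\mathcal{L}_{barr}\varepsilon \ge \delta-(\eta+\mathcal{L}\varepsilon)+2\eta \ge \delta+2\eta .
\end{align*}
This last bound is positive only if $\delta>-2\eta$, so the crude rewriting does not close the argument on its own. The robust route is to use the positivity margin directly: $-\eta\ge\mathcal{L}\varepsilon\ge\mathcal{L}_{barr}\varepsilon$ together with $\delta>0$ gives $\delta-\eta-\mathcal{L}_{barr}\varepsilon\ge\delta>0$. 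I expect this sign bookkeeping for the unsafe-set condition to be the main point requiring care. I would present it exactly in this form: $-\eta-\mathcal{L}_{barr}\varepsilon\ge0$ follows from \eqref{eq:cond_ver}.

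\textbf{Condition \eqref{eq:CBC_diff}.} Take any $x\in\X$ and $x_s\in\mathcal{X}$ with $|x-x_s|\le\varepsilon$. We split the difference into three pieces:
\begin{align*}
B(f(x))-B(x) = \bigl[B(f(x))-B(f(x_s))\bigr] + \bigl[B(f(x_s))-B(x_s)\bigr] + \bigl[B(x_s)-B(x)\bigr],
\end{align*}
writing $B$ for $B(\cdot;\beta^*)$. The first bracket is at most $\mathcal{L}_{barr}|f(x)-f(x_s)|\le\mathcal{L}_{barr}\mathcal{L}_x\varepsilon$ by Assumption~\ref{assum:Lipschitz}. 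The middle bracket is at most $\eta$ because $\mathsf{L}_3(\beta^*)=0$. The last bracket is at most $\mathcal{L}_{barr}\varepsilon$. Summing the three bounds gives $B(f(x))-B(x)\le\eta+\mathcal{L}_{barr}(\mathcal{L}_x+1)\varepsilon=\eta+\mathcal{L}\varepsilon\le0$.

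Finally, we need $\X_0\cap\X_U=\emptyset$, which is part of the setting. With all three conditions established, the barrier certificate lemma of \cite{jagtap2020compositional} applies and yields safety of all trajectories starting in $\X_0$.
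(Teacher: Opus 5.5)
Your proof is correct and follows the paper's argument. The paper uses the same covering-plus-Lipschitz bound for the initial and unsafe conditions: it too obtains $B(x;\beta^*)\ge\delta>0$ from $-B(x_s;\beta^*)+\delta\le\eta$ together with $\eta+\mathcal{L}_{barr}\varepsilon\le 0$. It also uses the identical three-term split for the decrease condition. Your intermediate chain ending in $\delta+2\eta$ is valid but unnecessary and can simply be deleted.
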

\begin{proof}
    We start the proof with the candidate barrier certificate $B(x; \beta^*)$, which is obtained upon solving the CP in Algorithm~\ref{algo:NN_training}. As the CP is feasible, it is evident that individual loss functions in \eqref{eq:loss_CBC} are zero with respect to the sampled points $x_s$ from the state space, which implies:
    \begin{align*}
        B(x_s; \beta^*) &\leq \eta, \quad \forall x_s \in \mathcal{X}_0, \\
        -B(x_s; \beta^*) +\delta &\leq \eta, \quad \forall x_s \in \mathcal{X}_U, \\
        B(f(x_s); \beta^*) - B(x_s; \beta^*) &\leq \eta, \quad \forall x_s \in \mathcal{X}.
    \end{align*}
    Now we prove that for all $x \in \X$ if the condition~\eqref{eq:cond_ver} holds, the 
    \begin{align*}
        &\text{(a) $ \forall x \in \X_0, \exists x_s \in \mathcal{X}_0$. Therefore:}\\
        & B(x; \beta^*) = B(x; \beta^*) - B(x_s; \beta^*) + B(x_s; \beta^*) \leq \mathcal{L}_{barr}|x - x_s| + \eta \leq \eta + \mathcal{L}\varepsilon \leq 0. \\
        &\text{(b) $\forall x \in \X_U, \exists x_s \in \mathcal{X}_U$. Therefore:}\\
        &-B(x; \beta^*) + \delta = -B(x; \beta^*) + B(x_s; \beta^*)  - B(x_s; \beta^*) + \delta \leq \mathcal{L}_{barr}|x - x_s| + \eta \leq \eta + \mathcal{L}\varepsilon \leq 0. \\
        &\text{(c) $\forall x \in \X, \exists x_s \in \mathcal{X}$. Therefore:}\\
        & B(f(x); \beta^*) - B(x; \beta^*) = B(f(x); \beta^*) \!-\! B(f(x_s); \beta^*) + B(f(x_s); \beta^*) - B(x; \beta^*) + B(x_s; \beta^*) - B(x_s; \beta^*) \\
        & \leq \mathcal{L}_{barr}| f(x) - f(x_s)| + \mathcal{L}_{barr} |x - x_s| + \eta \leq \mathcal{L}_{barr} (\mathcal{L}_x + 1) \varepsilon + \eta \leq 0.  
    \end{align*}
    Therefore, if condition \eqref{eq:cond_ver} is satisfied by the ELM, which has been trained using the sampled data points, the ELM is a valid barrier certificate for the discrete-time system, thereby completing the proof.
\end{proof}

\begin{remark}
    The choice of $\eta$ can be close to zero if a small $\varepsilon$ is considered; however, that means the number of samples is higher and computation time increases. Refer to Table~\ref{tab:computation_elm} for details of computation time with respect to sample complexity.    
\end{remark}

We present the training procedure of the ELM-barrier as follows in Algorithm~\ref{algo:NN_training}. 

\begin{algorithm}
\caption{Training of the ELM-barrier}
\label{algo:NN_training}
\begin{algorithmic}[1]
    \Require Data set: $\mathcal{X}_0, \mathcal{X}_U, \mathcal{X}$ 
    \Ensure $B(x; \beta)$
    \State Select the hyperparameters $ \eta, c = [c_1, c_2, c_3]$.
    \State Initialize ELM by sampling random weights, bias, and trainable parameters $W, b$ and $\beta$, respectively. Normalize the weights to $\overline{W}$ such that $\lVert \overline{W} \rVert = 1$.
    \State Solve the CP~\eqref{eq:CP} to find the optimal $\beta $ denoted by $\beta^*$.
    \If{CP infeasible} \Return \textsc{Infeasible, Change Hyperparameters} \EndIf
    \State Compute $\mathcal{L}_{barr} = |\beta^*| \times \mathcal{L}_{\phi}$ and get the constant $\mathcal{L}$ according to Theorem~\ref{th:verification}.
    \If{Condition~\eqref{eq:cond_ver} is unsatisfied} \Return \textsc{Not Verified, Change $\eta$ or $\varepsilon$.} \EndIf
    \State \textbf{return} $B(x; \beta^*)$.
\end{algorithmic}
\end{algorithm}

\section{Case Studies}
We perform our proposed methodology on two different examples, the first one being a simple pendulum and the second one being a magnetic levitator system. All the computations and comparisons were performed in Python 3.10 on a Windows machine with an Intel Core i7-14700 CPU and 32 GB RAM. {We have used the HiGHS interior-point QP solver \cite{huangfu2018parallelizing} to solve the QP~\eqref{eq:CP}.}

\subsection{Simple Pendulum System}
For the first case study, we consider a simple undamped pendulum with the following discrete-time dynamics:
\begin{subequations}
\begin{align}
    \mathsf{x}_1(k+1) &= \mathsf{x}_1(k) + \tau \mathsf{x}_2(k), \\
    \mathsf{x}_2(k+1) &= \mathsf{x}_2(k) - \tau \frac{g}{l}\sin(\mathsf{x}_1(k)),
\end{align}
\end{subequations}
where $\mathsf{x}_1, \mathsf{x}_2$ denote the angle and the angular velocity of the pendulum while $g$ and $l$ denote the acceleration due to gravity and the length of the pendulum, respectively. The domain of the system is given as $\X = [-\pi/6, \pi/6]^2$ while the initial and safe regions are considered as $\X_0 = [-\pi/15, \pi/15]^2, \X_s = [-\pi/8, \pi/8]^2$ while the unsafe region is chosen to be $\X_u = \X \setminus \X_s$. Though we consider the model to be unknown, the Lipschitz constant is considered to be known as $\mathcal{L}_x = 1.1$. 

The goal is to verify using some barrier certificate that if the system starts within $\X_0$, it will be safe. To do so, we have considered the barrier certificate as ELM $B(x; \beta)$. For synthesizing the certificate, we have considered $\varepsilon = 0.00123$ and set the hidden neuron number as $p=50$. We deploy Algorithm~\ref{algo:NN_training} to train the ELM and using $\eta = -0.0001$, the network is assured to be formally verified with Lipschitz constant $\mathcal{L}_{barr} = 0.033$ as $\eta + \mathcal{L}\varepsilon = -0.0001 + 0.033\times2.1\times0.00123 = -0.000016 < 0$. The obtained barrier certificate and its level set with a sample trajectory of the system are shown in Figure \ref{fig:simp_pend}. In addition, we performed experimental comparisons to analyze the certificate synthesis and its training time with respect to the number of hidden neurons, which measures the representation flexibility of the ELM, and the number of samples which dictates the validity of the obtained certificates via~\eqref{eq:cond_ver}. As seen in Table~\ref{tab:computation_elm}, a large number of samples is required to ensure verification success, whereas over $50$ neurons is sufficient to synthesize a barrier certificate for the example.   

\begin{figure}[h]
\centering
\begin{minipage}{0.48\textwidth}
\centering
\captionof{table}{Computation time comparison for different hidden units and sample size}
\label{tab:computation_elm}
\resizebox{\linewidth}{!}{%
\begin{tabular}{|c|c|c|c|c|}
\hline
\begin{tabular}[c]{@{}c@{}}Hidden \\ Units\end{tabular} &
  \begin{tabular}[c]{@{}c@{}}Epsilon\\ ($\varepsilon$)\end{tabular} &
  \begin{tabular}[c]{@{}c@{}}Number of\\ Samples (N)\end{tabular} &
  \begin{tabular}[c]{@{}c@{}}Computation \\ Time (in sec.)\end{tabular} &
  Verified \\ \hline
\multirow{3}{*}{50}  & 0.00740 & $1\times10^4$  & 0.4   & \xmark \\ \cline{2-5}
  & 0.00370 & $4\times10^4$  & 2.1   & \xmark \\ \cline{2-5}
  & 0.00123 & $36\times10^4$ & 21.2  & \cmark \\ \hline
\multirow{3}{*}{100} & 0.00740 & $1\times10^4$  & 0.9   & \xmark \\ \cline{2-5}
  & 0.00370 & $4\times10^4$  & 4.3   & \xmark \\ \cline{2-5}
  & 0.00123 & $36\times10^4$ & 48.5  & \cmark \\ \hline
\multirow{3}{*}{200} & 0.00740 & $1\times10^4$  & 1.8   & \xmark \\ \cline{2-5}
  & 0.00370 & $4\times10^4$  & 7.3   & \xmark \\ \cline{2-5}
  & 0.00123 & $36\times10^4$ & 110.1 & \cmark \\ \hline
\end{tabular}}
\end{minipage}%
\hfill
\begin{minipage}{0.48\textwidth}
\centering
\includegraphics[width=0.95\linewidth]{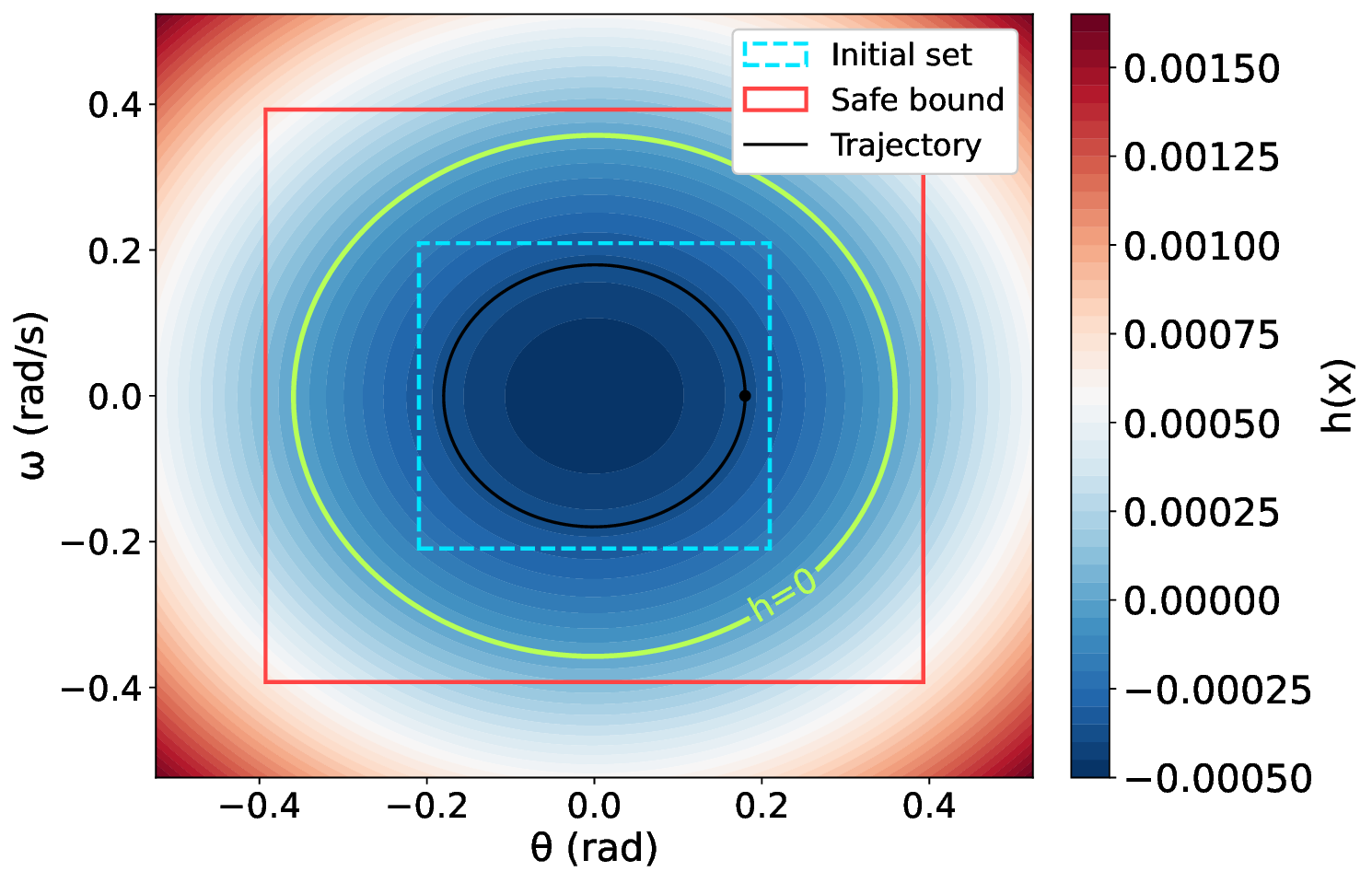}
\captionof{figure}{Simple Pendulum: Barrier Level Sets.}
\label{fig:simp_pend}
\end{minipage}
\end{figure}

\subsection{Magnetic Levitator System}
For the second case study, we consider a magnetic levitator system adapted from \cite{barie1996linear}, whose discrete-time dynamics is given by the following set of equations:
\begin{subequations}
\begin{align}
    \mathsf{x}_1(k\!+\!1) &=\! \mathsf{x}_1(k) + \tau \mathsf{x}_2(k), \\
    \mathsf{x}_2(k\!+\!1) &=\! \mathsf{x}_2(k) + \tau(g - \frac{b}{m} \frac{\mathsf{x}_3^2(k)}{\mathsf{x}_1^2(k)}), \\
    \mathsf{x}_3(k\!+\!1) &=\! \mathsf{x}_3(k) + \frac{\tau}{L}(u - R\mathsf{x}_3(k) + \frac{2b}{m}\frac{\mathsf{x}_2(k) \mathsf{x}_3(k)}{\mathsf{x}_1^2(k)}),
\end{align}
\end{subequations}
where $\mathsf{x}_1, \mathsf{x}_2, \mathsf{x}_3$ denote the position of the ball, velocity of the ball and induced current in the electromagnetic coil respectively while $m, g, b, R$ and $L$ denote the mass of the ball, acceleration due to gravity, magnetic force constant of the electromagnet, resistance and inductance of the coil, respectively. We consider an LQR tracking controller given by $u = u^* - K(x - x^*)$ where $x^*$ is the desired equilibrium position with $u^*$ being a nominal controller and $K$ is the gain. The domains of the system are considered as $\X = [85,115] \times [40, 48] \times [0.25, 0.35]$, while the initial and safe regions are considered as $\X_0 = [97.5, 102.5] \times [44,46] \times [0.28, 0.3], \X_s = [90, 110] \times [42,47] \times [0.26, 0.33]$ while the unsafe region is chosen to be $\X_u = \X \setminus \X_s$. Though we consider the model to be unknown, the Lipschitz constant is considered to be known as $\mathcal{L}_x = 5.964$. 

\begin{table*}[t]
\centering
\caption{Qualitative comparison of proposed ELM-based technique with other methods}
\label{tab:quality-elm}
\resizebox{0.95\textwidth}{!}{
\begin{tabular}{|c|c|c|c|c|c|c|}
\hline
Methods             & \begin{tabular}[c]{@{}c@{}}Unknown \\ Dynamics\end{tabular} & \begin{tabular}[c]{@{}c@{}}Convex \\ Program\end{tabular} & \begin{tabular}[c]{@{}c@{}}Backpropagation \\ Independency \end{tabular} & Scalability & \begin{tabular}[c]{@{}c@{}}General \\ Nonlinear \\ Dynamics \end{tabular} & \begin{tabular}[c]{@{}c@{}}Basis Function\\ Independency \end{tabular} \\ \hline
SOS-based method \cite{jagtap2020formal}
&  \xmark     
&  \cmark 
&  \cmark               
&  \xmark           
&  \xmark
&  \cmark    \\ \hline
Scenario Optimization method \cite{nejati2023formal} 
&  \cmark                                     
&  \cmark                                     
&  \cmark
&  \xmark
&  \cmark 
&  \xmark      \\ \hline
Neural network method \cite{anand2023formally}  
&  \cmark                                        
&  \xmark                                      
&  \xmark
&  \xmark
&  \cmark 
&  \cmark\\ \hline
\textbf{ELM method (ours)} 
&  \cmark                                        
&  \cmark                                        
&  \cmark
&  \cmark
&  \cmark 
&  \cmark   \\ \hline
\end{tabular}
}
\end{table*}

The goal is to verify using some barrier certificate that if the system starts within $\X_0$, it will be safe. To do so, we have considered the barrier certificate as ELM $B(x; \beta)$. For synthesizing the certificate, we have considered $\varepsilon = 0.002759$ and set the hidden neuron number as $p=60$. We deploy Algorithm~\ref{algo:NN_training} to train the ELM and using $\eta = -0.011$, the network is assured to be formally verified with Lipschitz constant $\mathcal{L}_{barr} = 0.2209$ as $\eta + \mathcal{L}\varepsilon = -0.011 + 0.2209\times(5.964+1)\times0.002759 = -0.00675 < 0$. A set of sample trajectories of the controlled system with corresponding barrier values over the trajectories with time is shown in Figure \ref{fig:maglev}, where the blue dashed line marks the boundary of the initial set and the red dashed line marks the boundary of the unsafe set. One can see that the barrier value is always negative over the trajectories, satisfying the conditions. 

\begin{figure}[h]
\centering
\begin{minipage}{0.48\textwidth}
\centering
\captionof{table}{Quantitative Comparison of Different Algorithms}
\label{tab:computation_ncbf}
\resizebox{\linewidth}{!}{%
\begin{tabular}{|c|c|c|c|}
\hline
Methods &
  \begin{tabular}[c]{@{}c@{}}\textbf{Proposed}\\ \textbf{Method}\end{tabular} &
  \begin{tabular}[c]{@{}c@{}}NCBF with\\ Lipschitz \cite{anand2023formally}\end{tabular} &
  \begin{tabular}[c]{@{}c@{}}NCBF with \\ SMT solver\cite{zhao2020synthesizing} \end{tabular} \\ \hline
\begin{tabular}[c]{@{}c@{}}Average\\ Convergence\\ Time (In sec.)\end{tabular}       & $21.204 \pm 0.212$ & $2072.304 \pm 42.784$ & $2118.34 \pm 35.631$ \\ \hline
\begin{tabular}[c]{@{}c@{}}Lipschitz \\ constant\end{tabular}                        & $0.033$            & $1.5$                 & -                  \\ \hline
\begin{tabular}[c]{@{}c@{}}No. of Samples\\ required for\\ Verification\end{tabular} & $36\times10^4$     & $49\times10^6$        & -                  \\ \hline
\end{tabular}
}
\end{minipage}%
\hfill
\begin{minipage}{0.48\textwidth}
\centering
    \includegraphics[width=0.95\linewidth]{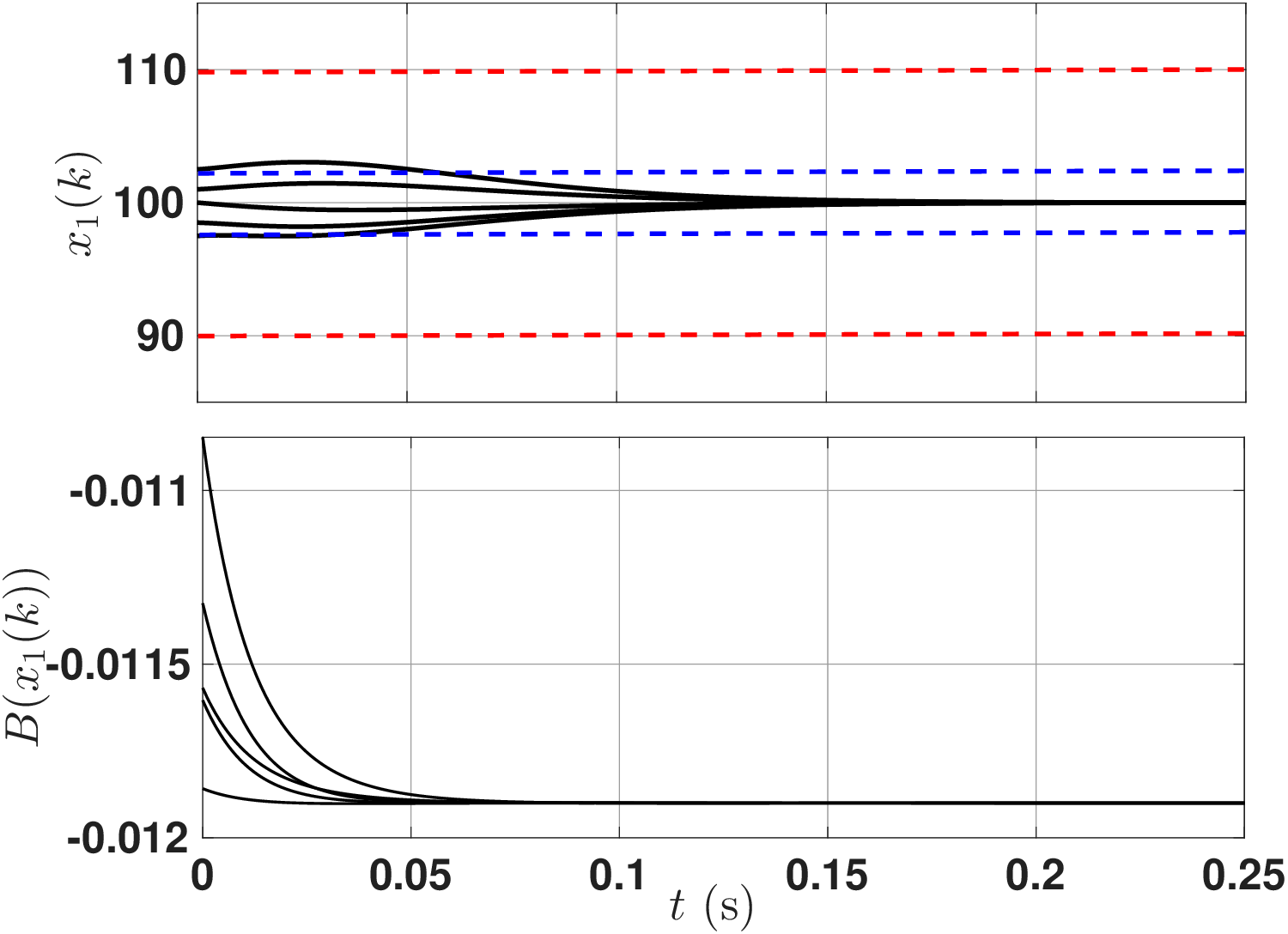}
    \caption{Magnetic Levitation System: Top- Trajectories of the ball position under stabilizing controller starting from different initial conditions, Bottom- Barrier values over time.}
    \label{fig:maglev}
\end{minipage}
\end{figure}

\section{Comparison with Related Approaches}

\subsection{Theoretical Comparison}
{We compare our ELM-based method with several existing methods that propose the synthesis of barrier certificates, such as dynamics-dependent conventional SOS-based techniques \cite{parrilo2003semidefinite} and dynamics-independent data-driven scenario optimization techniques \cite{nejati2023formal} and neural network-based methods \cite{anand2023formally}. While SOS-based techniques can synthesize the barrier certificate using convex programming, they depend on knowledge of dynamics, suffer from scalability due to the use of semidefinite programming and can not be applied to any general nonlinear dynamics.} The data-driven scenario optimization-based method can handle general unknown nonlinear dynamics but suffers from computational problems due to dependency on the basis function and not using convex programming. Neural network-based methods also suffer from a similar problem due to backpropagation of loss. Compared to the methods, our ELM-based approach can be applied to any general unknown nonlinear dynamics with less computation time, due to convex programming and can also be applied for higher-dimensional systems as well. The theoretical comparison is summarized in Table~\ref{tab:quality-elm}.

\subsection{Quantitative Comparison}

To substantiate the claim of reduction of the computation time, Lipschitz constant and efficient use of the sample collection procedure compared to classical neural barrier certificate synthesis methods, we compare our ELM-based technique to synthesize a barrier certificate with other existing neural CBF methods such as \cite{anand2023formally} and \cite{zhao2020synthesizing}. While \cite{anand2023formally} explores Lipschitz-based continuity to synthesize and verify the certificate, \cite{zhao2020synthesizing} uses an SMT-based solver to verify the trained certificates. We empirically show that the computation time is less for ELM compared to other methods, where each method has been tested for $10$ trials with random initializations. Furthermore, in \cite{anand2023formally} the Lipschitz constant of the barrier certificate is predefined, while the proposed method remove the choice and minimize it using the QP. As a result, due to the low Lipschitz constant, a smaller number of samples are required to formally verify the barrier in the case of ELM compared to \cite{anand2023formally}, proving the claim of sample-efficient algorithm synthesis for safety verification of an unknown system. Table~\ref{tab:computation_ncbf} reports the summary of the quantitative comparisons.   


\section{Conclusion and Future Work}
The work proposes a barrier certificate synthesis strategy using an extreme learning machine (ELM). Formulating the barrier conditions as a convex formulation, we deploy ELM to compute the barrier certificate. We use a Lipschitz-based condition and provide a post-hoc validity guarantee of the trained barrier. We validate our approach using two different case studies, and showcase the benefits of the approach through benchmark comparisons with existing neural barrier certificate synthesis approaches. Future work includes the controller synthesis procedure with an extension to interconnected systems.

\bibliographystyle{plain}
\bibliography{sources}

\end{document}